\newtheorem{theorem}{Theorem}[section]
\newtheorem{lemma}{Lemma}
\newcommand{\Exp}{\operatorname{Exp}}
\newcommand{\Log}{\operatorname{Log}}
\title{\LARGE \bf
Log-linear Backstepping control on $\mathrm{SE}_2(3)$
}
\author{Li-Yu Lin, Benjamin Perseghetti, and James Goppert
\thanks{L. Lin and J. Goppert are with the School of Aeronautics and Astronautics, Purdue University, West Lafayette, IN 47906, USA (e-mail: lin1191@purdue.edu; jgoppert@purdue.edu).}
\thanks{B. Perseghetti is with Rudis Laboratories, Dayton, OH 45342 USA (e-mail: bperseghetti@rudislabs.com).}
\thanks{The authors would like to thank NXP for their support and contribution to the CogniPilot Foundation to help enable this work.}}
\begin{document}

\maketitle

\begin{abstract}
Most of the rigid-body systems which evolve on nonlinear Lie groups where Euclidean control designs lose geometric meaning. In this paper, we introduce a log-linear backstepping control law on $SE_2(3)$ that preserves full rotational–translational coupling. Leveraging a class of mixed-invariant system, which is a group-affine dynamic model, we derive exact logarithmic error dynamics that are linear in the Lie-algebra. The closed-form expressions for the left- and right-Jacobian inverses of $SE_2(3)$ are expressed in the paper, which provides us the exact error dynamics without local approximations. A log-linear backstepping control design ensures the exponentially stable for our error dynamics, since our error dynamics is a block-triangular structure, this allows us to use Linear Matrix Inequality (LMI) formulation or $H_{\inf}$ gain performance design. This work establishes the exact backstepping framework for a class of mixed-invariant system, providing a geometrically consistent foundation for future Unmanned Aerial Vehicle (UAV) and spacecraft control design.
\end{abstract}

% \section{Introduction}

\section{Background and Preliminaries}

\subsection{Lie-Group and Invariant Error Concepts}
Rigid-body motion evolves on nonlinear manifolds such as $\mathrm{SO}(3)$ or $\mathrm{SE}(3)$, where Euclidean subtraction loses geometric meaning. A Lie group $G$ provides a globally valid configuration space with associated Lie algebra $\mathfrak g$ (the tangent space at the identity). The exponential and logarithmic maps,
\begin{align}
\Exp: \mathfrak g \!\to\! G, \qquad
\Log: G \!\to\! \mathfrak g,
\end{align}
relate algebra elements to finite motions via $X=\Exp(\xi^\wedge)$.

We will employ the $\mathrm{SE}_2(3)$ Lie group, whose matrix representation is given by:
\begin{equation}
X =
\begin{bmatrix}
R & v & p\\
0 & 1 & 0\\
0 & 0 & 1
\end{bmatrix}
\in \mathrm{SE}_2(3),
\end{equation}
where $p \in \mathbb{R}^3$, inertial-frame velocity $v \in \mathbb{R}^3$, 
and attitude $R \in \mathrm{SO}(3)$,
This group is capable of representing 3D rigid body rotation and translation. In addition, it can handle translational dynamics as we will show in the following. The associated Lie algebra, $\mathfrak{se}_2(3)$, is given by:
\begin{equation}
[x]^\wedge =
\begin{bmatrix}
[\omega]^\wedge & a & v\\
0 & 1 & 0\\
0 & 0 & 1
\end{bmatrix}
\in \mathfrak{se}_2(3),
\end{equation}
where $x = \begin{bmatrix}v & a & \omega\end{bmatrix}^T$ is the element in the Lie algebra, and $[\cdot]^\wedge$ indicates the wedge operator that maps the element from $\mathbb{R}^9$ to the $\mathfrak{se}_2(3)$ Lie algebra. Although the angular velocity, $\omega$, cannot be embedded in the $SE_2(3)$ Lie group, we can consider the dynamics of the angular velocity as a separate sub-system in our application.

For trajectories $X=(p, v, R)^{\wedge}$, $\bar X = (\bar p, \bar v, \bar R)^{\wedge}$ $ \in G$, the \emph{left-invariant} error
\begin{align}
\eta = \bar X^{-1} X,
\end{align}
is unchanged under \emph{left} multiplication of both $X$ and $\bar X$ by the same group element (a change of body frame), whereas the \emph{right-invariant} error $\eta = X \bar X^{-1}$ is invariant under right multiplication.

\vspace{0.3em}
Define the \emph{left-invariant configuration error} between the true and reference states as
\begin{align}
\eta_p &= \bar{R}^\top (p - \bar{p}), \\ \nonumber
\eta_v &= \bar{R}^\top (v - \bar{v}),\\ \nonumber
\eta_R &= \bar{R}^\top R \in \mathrm{SO}(3),
\end{align}
which together form the group element $\eta = [(\eta_p, \eta_v, \eta_R)]^\wedge \in  \mathrm{SE}_2(3)$
satisfying $\eta = I$ when $R=\bar{R}$, $v=\bar{v}$, and $p=\bar{p}$.

For matrices $A,B \in \mathbb{R}^{n\times n}$, the \emph{commutator} (Lie bracket) is
\begin{align}
[A,B] \coloneqq AB - BA,
\end{align}
which satisfies antisymmetry and the Jacobi identity.
The associated \emph{adjoint functions} are
\begin{align}
\operatorname{ad}_A(B)=[A,B], \qquad 
\operatorname{Ad}_{[A]^\vee} B= A^{-1}BA.
\end{align}

The logarithmic (Lie algebra) error is defined as
\begin{equation}
\xi =[ \Log(\eta)]^\vee \in \mathfrak{se}_2(3),
\end{equation}
which provides a minimal coordinate representation of the configuration deviation.

\section{Mixed-Invariant Dynamics and Error Formulation}

Define the augmented group elements
\begin{align}
X =
\begin{bmatrix}
R & v & p\\[3pt]
\mathbf{0}_{1\times 3} & 1 & 0\\
\mathbf{0}_{1\times 3} & 0 & 1
\end{bmatrix},
\qquad
\bar X =
\begin{bmatrix}
\bar R & \bar v & \bar p\\[3pt]
\mathbf{0}_{1\times 3} & 1 & 0\\
\mathbf{0}_{1\times 3} & 0 & 1
\end{bmatrix},
\end{align}
where $X$ is the state of the system and $\bar{X}$ is the reference trajectory.

Consider a specific class of mixed-invariant system:
\begin{equation}
    \dot{X} = (M - C)X + X(N + C),
\end{equation}
where
\begin{align}
M =
\begin{bmatrix}
0 & g & 0
\end{bmatrix}^\wedge,
\qquad \nonumber
N =
\begin{bmatrix}
0 & T e_T & \omega
\end{bmatrix}^\wedge,
\end{align}
\begin{align}
C =
\begin{bmatrix}
\mathbf{0}_{3\times3} & \mathbf{0}_{3\times3} & \mathbf{0}_{3\times1}\\ 
\mathbf{0}_{1\times3} & \mathbf{0}_{1\times3} & 1\\
\mathbf{0}_{1\times3} & \mathbf{0}_{1\times3} & 0
\end{bmatrix}.
\end{align}

where $M, N \in \mathfrak{se}_{2}(3)$ and the system is \emph{mixed-invariant}. The left-invariant term expresses gravity $g$ in the inertial frame, and the right-invariant term expresses thrust input and angular velocity in the body frame, where $e_T$ denotes the unit vector of thrust input.

The left–invariant group error, $\eta$, between the body position and the reference position is defined as
\begin{align}
\eta = \bar{X}^{-1}X \in \mathrm{SE}_{2}(3),
\end{align}
and the algebraic error, $\xi$, as
\begin{align}
\xi = [\Log(\eta)]^\vee \in \mathfrak{se}_2(3).
\end{align}

\begin{lemma}[Logarithm Error dynamics]
Let $G$ be a Lie group with Lie algebra $\mathfrak{g}$ of dimension $n$.
Let $\eta = \bar{X}^{-1}X \in G$ and define the log–error coordinates
$\xi \coloneq [(\Log \eta)]^{\vee} \in \mathbb{R}^n$.
For $\bar N, \bar M, \tilde M,\tilde N \in \mathfrak{g}$ denote their coordinate
vectors by
\begin{align}
\bar n \coloneq [\bar N]^{\vee} = [\mathbf{0}, \bar{T} e_T, \bar{\omega}]^\top,
\bar m \coloneq [\bar M]^{\vee} = [\mathbf{0}, g, 0]^\top, \\ \nonumber
\tilde n \coloneq [\tilde N]^{\vee} = [\mathbf{0}, \tilde{T} e_T, \tilde{\omega}]^\top,
\tilde m \coloneq [\tilde M]^{\vee} = [\mathbf{0}, \tilde g, 0]^\top \in\mathbb{R}^9.
\end{align}
Then the algebraic error $\xi$ satisfies
\begin{align}
\boxed{
\dot{\xi}
= -\operatorname{ad}_{\bar n} \xi
+ ([\xi^\wedge,C])^\vee
+ J_r^{-1}(\xi)\tilde n
+ J_\ell^{-1}(\xi)\operatorname{Ad}_{[{\bar X^{-1}}]^{\vee}}\tilde m  ,
}
\end{align}
where
\begin{align}
J_\ell(\xi) = \frac{\operatorname{ad}_\xi e^{-\operatorname{ad}_\xi}}{I - e^{-\operatorname{ad}_\xi}}, \qquad
J_r(\xi) = \frac{\operatorname{ad}_\xi}{I - e^{-\operatorname{ad}_\xi}},
\end{align}
where $I$ is the $9\times 9$ identity. Here $\operatorname{ad}_{\bar n}$,
$J_\ell(\xi)$, $J_r(\xi)$, and $\operatorname{Ad}_{\bar X^{-1}}^{\vee}$ are
all linear maps $\mathbb{R}^n\!\to\!\mathbb{R}^n$.
\end{lemma}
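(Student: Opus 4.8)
The plan is to differentiate the group error $\eta=\bar X^{-1}X$ directly, reduce $\dot\eta$ to a single \emph{left-acting} Lie-algebra factor plus the right-acting input term $\eta\tilde N$, and then pass to $\dot\xi$ through the differential of the exponential; the only nonstandard point is that the matrix $C$ does not lie in $\mathfrak{se}_2(3)$. Assume, consistently with the notation $\bar m,\bar n,\tilde m,\tilde n$, that $\bar X$ obeys the same group-affine law with $\bar M,\bar N$ replacing $M,N$, and write $M=\bar M+\tilde M$, $N=\bar N+\tilde N$. Starting from $\dot\eta=\bar X^{-1}\dot X-\bar X^{-1}\dot{\bar X}\bar X^{-1}X$, I substitute both dynamics and use $\bar X^{-1}X=\eta$ repeatedly: the left-multiplication parts $(M-C)X$ and $(\bar M-C)\bar X$ combine so that the two copies of $C$ cancel and only $\operatorname{Ad}_{\bar X^{-1}}\tilde M$ survives acting on the left, while the right-multiplication parts $X(N+C)$ and $\bar X(\bar N+C)$ give $\eta(N+C)-(\bar N+C)\eta$. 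This yields the intermediate identity $\dot\eta=(\operatorname{Ad}_{\bar X^{-1}}\tilde M)\eta+(\eta\bar N-\bar N\eta)+\eta\tilde N+(\eta C-C\eta)$; here $\tilde M$ (not $M$) appears because gravity enters on the left and cancels in $\bar X^{-1}(M-\bar M)\bar X$, whereas $\bar N$ persists because thrust and angular velocity enter on the right.

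Next I rewrite every term except $\eta\tilde N$ as $Y\eta$ with $Y\in\mathfrak{se}_2(3)$. For $(\operatorname{Ad}_{\bar X^{-1}}\tilde M)\eta$ this is immediate; $\eta\bar N-\bar N\eta=(\operatorname{Ad}_\eta\bar N-\bar N)\eta$ with $[\operatorname{Ad}_\eta\bar N-\bar N]^\vee=(e^{\operatorname{ad}_\xi}-I)\bar n$ since $\operatorname{Ad}_{\Exp(\xi^\wedge)}=e^{\operatorname{ad}_\xi}$. The delicate term is $\eta C-C\eta$: although $C\notin\mathfrak{se}_2(3)$, a direct $5\times5$ matrix computation shows $[\xi^\wedge,C]\in\mathfrak{se}_2(3)$ for every $\xi^\wedge\in\mathfrak{se}_2(3)$, with $([\xi^\wedge,C])^\vee$ the vector carrying the $a$-component of $\xi$ into its $v$-slot and vanishing elsewhere. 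Because $\mathfrak{se}_2(3)$ is $\operatorname{ad}_{\xi^\wedge}$-invariant, the matrix identity $\eta C\eta^{-1}-C=\sum_{k\ge1}\tfrac{1}{k!}\operatorname{ad}_{\xi^\wedge}^{k-1}([\xi^\wedge,C])$ then lies in the algebra and equals, in coordinates, $\tfrac{e^{\operatorname{ad}_\xi}-I}{\operatorname{ad}_\xi}([\xi^\wedge,C])^\vee$, i.e.\ the left Jacobian applied to $([\xi^\wedge,C])^\vee$. Hence $\eta C-C\eta=(\operatorname{Ad}_\eta C-C)\eta$ is again of the form $Y\eta$, and collecting yields $\dot\eta=Y\eta+\eta\tilde N$ with $Y=\operatorname{Ad}_{\bar X^{-1}}\tilde M+(\operatorname{Ad}_\eta\bar N-\bar N)+(\operatorname{Ad}_\eta C-C)$.

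Finally, $\eta^{-1}\dot\eta=\operatorname{Ad}_{\eta^{-1}}Y+\tilde N$, so by $\dot\xi=J_r^{-1}(\xi)[\eta^{-1}\dot\eta]^\vee$ together with the intertwining relation $J_r^{-1}(\xi)\operatorname{Ad}_{[\eta^{-1}]^\vee}=J_\ell^{-1}(\xi)$ (equivalently $J_\ell=\operatorname{Ad}_\eta J_r$) I obtain $\dot\xi=J_\ell^{-1}(\xi)[Y]^\vee+J_r^{-1}(\xi)\tilde n$. Substituting $[Y]^\vee=\operatorname{Ad}_{[\bar X^{-1}]^\vee}\tilde m+(e^{\operatorname{ad}_\xi}-I)\bar n+\tfrac{e^{\operatorname{ad}_\xi}-I}{\operatorname{ad}_\xi}([\xi^\wedge,C])^\vee$ and distributing $J_\ell^{-1}(\xi)$: the $\tilde m$ term is already in final form; the $\bar n$ term collapses via $J_\ell^{-1}(\xi)(e^{\operatorname{ad}_\xi}-I)=\operatorname{ad}_\xi$ and $\operatorname{ad}_\xi\bar n=[\xi^\wedge,\bar N]^\vee=-\operatorname{ad}_{\bar n}\xi$; the $C$ term collapses because $J_\ell^{-1}(\xi)$ inverts precisely the left-Jacobian factor, leaving $([\xi^\wedge,C])^\vee$. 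Assembling the four pieces reproduces the boxed expression. I expect the main obstacle to be the $C$ term: establishing that a matrix outside the algebra still yields an $\mathfrak{se}_2(3)$-valued bracket and, crucially, that it enters with exactly one left-Jacobian factor so that it cancels cleanly; once that is secured, the remainder is routine Lie-group calculus plus the stated identities relating $J_\ell$, $J_r$, and the adjoint.
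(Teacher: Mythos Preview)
Your proof is correct and follows the same route as the paper: differentiate $\eta=\bar X^{-1}X$, pass to $\eta^{-1}\dot\eta$, and push through the right-trivialized differential of $\Log$ together with $J_r^{-1}\operatorname{Ad}_{[\eta^{-1}]^\vee}=J_\ell^{-1}$. Your treatment is in fact more explicit than the paper's on the two points it glosses over---why $[\xi^\wedge,C]\in\mathfrak{se}_2(3)$ and why the $C$-contribution collapses to $([\xi^\wedge,C])^\vee$ after $J_\ell^{-1}$---so apart from a harmless sign convention (you take $\tilde M=M-\bar M$, $\tilde N=N-\bar N$, whereas the paper uses the opposite), nothing needs to change.
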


\begin{proof}
From the definition of $\eta$, we have
\begin{align}
\dot{\eta} = -\bar X^{-1}\dot{\bar X}\bar X^{-1}{X} + \bar X^{-1}\dot{{X}}.
\end{align}
Substituting the expressions for $\dot X$ and $\dot{\bar X}$ gives
\begin{align*}
\dot{\eta}
&= -\bar X^{-1}\!\big((M - C)\bar X + \bar X(N + C)\big)\bar X^{-1}{X} \\begin{align}3pt]
&\quad + \bar X^{-1}\!\big((\bar{M} - C){X} + {X}(\bar{N} + C)\big).
\end{align*}
Let
\begin{align}
\tilde{M} \coloneq \bar{M} - M,
\qquad
\tilde{N} \coloneq \bar{N} - N,
\end{align}
then
\begin{align*}
\dot{\eta}
&= X^{-1}\tilde{M}\bar{X}
+\eta \bar{N} - N\eta
- \eta C + C\eta.
\end{align*}
Left–multiplying by $\eta^{-1}$ yields
\begin{align*}
\eta^{-1}\dot{\eta}
&= \operatorname{Ad}_{\bar X^{-1}}\tilde M + \bar N - \operatorname{Ad}_{\eta^{-1}}N - (I - \operatorname{Ad}_{\eta^{-1}})C .
\end{align*}
Applying the vee operator and the right-trivialized differential of the log map,
\begin{align}
\dot\xi = J_r(\xi)(\eta^{-1}\dot\eta)^{\vee},
\qquad J_r(\xi) = \frac{\operatorname{ad}_\xi}{I - e^{-\operatorname{ad}_\xi}},
\end{align}
with $\operatorname{Ad}_{\eta^{-1}}^{\vee} = e^{-\operatorname{ad}_\xi}$ and 
$J_\ell(\xi) = \operatorname{Ad}_{\eta^{-1}}^{\vee}J_r(\xi)$, gives the stated result.
\end{proof}

The logarithm error dynamics have the explicit form:
\begin{align}
\dot{\xi} &= \begin{bmatrix}
    -[\bar{\omega}]_\times & I & 0 \\
    0 & -[\bar{\omega}]_\times & -[\bar T e_T]_\times \\
    0 & 0 & -[\bar{\omega}]_\times 
\end{bmatrix}\xi \\ \nonumber
&+ J_r^{-1}(\xi)\tilde n
+ J_\ell^{-1}(\xi)\operatorname{Ad}_{[{\bar X^{-1}}]^{\vee}}\tilde m \\ \nonumber
&= (-ad_{\bar n} + A_C)\xi + J_r^{-1}(\xi)\tilde n
+ J_\ell^{-1}(\xi)\operatorname{Ad}_{[{\bar X^{-1}}]^{\vee}}\tilde m
,
\end{align}
where $\bar{n} = [\mathbf{0}, \bar{T}e_T, \bar{\omega}]^\top$ and the
$-\operatorname{ad}_{\bar{n}}\xi$ term generates the frame-rotation
cross-products $-[\bar{\omega}]_\times$ on each component.
The $A_C$ matrix contributes only the kinematic coupling $\dot{\xi}_p = \xi_v$. 
The constant matrix $C$ encodes the kinematic coupling $\dot{\xi}_p = \xi_v$ within the $\mathrm{SE}_{2}(3)$ embedding.  
Although $C\notin\mathfrak{se}_{2}(3)$, it belongs to the normalizer of the algebra,
so $([\xi^\wedge,C])^\vee$ is a \emph{linear} function of $\xi$ that contributes
$A_C\xi = [\xi_v^\top, 0, 0]^\top$.

To express the left/right-Jacobian inverse into matrices form, let $\theta=||\omega||$, $W\equiv\omega^{\times}$, $W^{2}\equiv W~W$. Use series when $\theta$ is small. The $\mathbf{SO(3)}$ left Jacobian is given by:
\begin{align}
    J_{\ell}^{SO3}(\omega)=I+\frac{1-\cos\theta}{\theta^{2}}W+\frac{\theta-\sin\theta}{\theta^{3}}W^{2}
\end{align}
The right Jacobian is related to the left Jacobian by:

\begin{align}
    J_{r}^{SO3}(\omega)=J_{\ell}^{SO3}(-\omega)
\end{align}
The inverse of the left Jacobian is $S_{\ell}(\omega)\triangleq(J_{\ell}^{SO3}(\omega))^{-1}$:

\begin{align}
    S_{\ell}(\omega)=I-\frac{1}{2}W+\left(\frac{1}{\theta^{2}}-\frac{1+\cos\theta}{2\theta\sin\theta}\right)W^{2}
\end{align}
The inverse of the right Jacobian is $S_{r}(\omega)\triangleq(J_{r}^{SO3}(\omega))^{-1}$:

\begin{align}
    S_{r}(\omega)=S_{\ell}(-\omega)
\end{align}

The translation kernels, $Q_r$ and $Q_\ell$, are defined by the integrals:
\begin{align}
    Q_{r}(\omega)&=\int_{0}^{1}s~R(s)ds \\
    Q_{\ell}(\omega)&=\int_{0}^{1}(1-s)~R(s)ds
\end{align}
where $R(s) = \exp(sW)$.
The closed form is given by:
\begin{align}
    Q_{r}(\omega)=q_{0}I+q_{1}W+q_{2}W^{2}
\end{align}
with:
$q_0 = \frac{1}{2}, q_{1}=\frac{\sin\theta-\theta\cos\theta}{\theta^{3}}, q_{2}=\frac{1}{2\theta^{2}} - \frac{\sin\theta}{\theta^3} -\frac{\cos\theta-1}{\theta^4}.$ Then $Q_{\ell}(\omega)$ is given by:
\begin{align}
    Q_{\ell}(\omega)&=J_{\ell}^{SO3}(\omega)-Q_{r}(\omega) \\ \nonumber
    &=\frac{1}{2}I+\left(\frac{1-\cos\theta}{\theta^{2}}-q_{1}\right)W+\left(\frac{\theta-\sin\theta}{\theta^{3}}-q_{2}\right)W^{2}
\end{align}

When we have couplings linear in $p$ or $v$, a handy \begin{align}
\int_{0}^{1}\alpha(s)R(s)x^{\times}R(s)^{\top}ds=\left(\left(\int_{0}^{1}\alpha(s)R(s)ds\right)x\right)^{\times}
\end{align}
This leads to the tensor maps:
\begin{align}
    Q_{\ell}(\omega;x)&=(Q_{\ell}(\omega)x)^{\times} \\
    Q_{r}(\omega;x)&=(Q_{r}(\omega)x)^{\times}
\end{align}

Therefore, the $SE_2(3)$ Jacobian inverse can be written in the matrix form with lower-triangular $9\times9$ matrices in $3\times3$ blocks, ordered $[p, v, \omega]$:
\begin{align}
    J_{\ell}^{-1}(p,v,\omega)&=\begin{bmatrix}
S_{\ell} & -S_{\ell}Q_{\ell}S_{\ell} & -S_{\ell} {Q}_{\ell}(\omega;p)S_{\ell} \\
0 & S_{\ell} & -S_{\ell} {Q}_{\ell}(\omega;v)S_{\ell} \\
0 & 0 & S_{\ell}
\end{bmatrix} \\ \nonumber
 &= \begin{bmatrix}
     c_1 & c_2 & c_3 \\
     0 & c_1 & c_4 \\
     0 & 0 & c_1
 \end{bmatrix}, \\
J_{r}^{-1}(p,v,\omega)&=\begin{bmatrix}
S_{r} & S_{r}Q_{r}S_{r} & -S_{r} {Q}_{r}(\omega;p)S_{r} \\
0 & S_{r} & -S_{r} {Q}_{r}(\omega;v)S_{r} \\
0 & 0 & S_{r}
\end{bmatrix}  \\ \nonumber
&= \begin{bmatrix}
     d_1 & d_2 & d_3 \\
     0 & d_1 & d_4 \\
     0 & 0 & d_1
 \end{bmatrix}
\end{align}
Note that $J_{r}^{-1}(\xi)=J_{\ell}^{-1}(-\xi)$.
Therefore, the logarithm error dynamics can be written as:
\begin{align}
\dot{\xi} &= (-ad_{\bar n} + A_C) \xi + \begin{bmatrix}
    c_3 & c_2 e_T \\
    c_4 & c_1 e_T \\
    c_1 & 0
\end{bmatrix} \begin{bmatrix}
    \tilde \omega \\ \tilde T
\end{bmatrix} + \begin{bmatrix}
    d_2 \\ d_1 \\ 0
\end{bmatrix} \bar R \tilde g
,
\end{align}

\section{Log-Linear Backstepping Control}

The logarithm error dynamics can be seperated as:
\begin{align}
    \dot{\xi_p} &= -[\bar\omega]_\times \xi_p + \xi_v + c_3 \tilde \omega + c_2 e_T \tilde T + d_2 \bar R \tilde g \label{eq:dot_xi_p} \\
    \dot{\xi_v} &= -[\bar\omega]_\times \xi_v + [-\bar T e_T]\xi_r + c_4 \tilde \omega + c_1 e_T \tilde T + d_1 \bar R \tilde g \label{eq:dot_xi_v} \\
    \dot{\xi_r} &= -[\bar\omega]_\times \xi_r + c_1 \tilde \omega \label{eq:dot_xi_r}
\end{align}

To reach the desire position and attitude along the thrust input, we design a feedback control with virtual control points, $\xi^d_r$ and $\xi^d_v$, which represent desired attitude and velocity. We first design $\tilde \omega$ as: 
\begin{align}
    \tilde \omega = c_1^{-1}\left([\bar\omega]_\times \xi_r^d + \dot{\xi_r^d} - K_r(\xi_r - \xi_r^d) \right),
\end{align} 
where $K_r$ is the control gain. Substitute back to \eqref{eq:dot_xi_r}, we then get:
\begin{align}
    \dot{\xi_r} = -[\bar\omega]_\times \xi_r + [\bar\omega]_\times \xi_r^d + \dot{\xi_r^d} - K_r(\xi_r - \xi_r^d),
\end{align}
let $e_r \coloneq \xi_r - \xi_r^d$, the dynamics of the error is:
\begin{align}
    \dot{e_r} = -[\bar\omega]_\times e_r-K_r e_r.
\end{align}

To design the velocity controller, we rewrite \eqref{eq:dot_xi_v} with $e_r$:
\begin{align}
    \dot{\xi_v} &= -[\bar\omega]_\times \xi_v + [-\bar T e_T](\xi_r - \xi_r^d + \xi_r^d) + c_4 \tilde \omega + c_1 e_T \tilde T + d_1 \bar R \tilde g \\ \nonumber
    &= -[\bar\omega]_\times \xi_v + [-\bar T e_T]e_r + [-\bar T e_T]e_r \xi_r^d + c_4 \tilde \omega + c_1 e_T \tilde T + d_1 \bar R \tilde g
\end{align}
The control inputs are $\tilde T$ and $\xi_r^d$, therefore, the control is designed as:
\begin{align}
    [-\bar T e_T]e_r \xi_r^d + c_1 e_T \tilde T = -c_3 \tilde \omega - d_1 \bar R \tilde g - K_v e_v + \dot{\xi_v^d} -[\bar\omega]_\times \xi_v^d
\end{align}
where $\xi_v^d$ represents the desired error velocity between the vehicle and reference, $K_v$ represents the velocity control gain, and $e_v \coloneq \xi_v - \xi_v^d$.
The dynamics of the error dynamics is:
\begin{align}
    \dot{e_v} = -[\bar\omega]_\times e_v - K_v e_v + [-\bar T e_T]_\times e_r
\end{align}

To design the position controller, we rewrite \eqref{eq:dot_xi_p} with the $e_v$:
\begin{align}
\label{eq:dot_xi_p_2}
    \dot{\xi}_p &= -[\bar\omega]_\times \xi_p + \xi_v - \xi_v^d + \xi_v^d + c_3 \tilde \omega + c_2 e_T \tilde T + d_2 \bar R \tilde g \\ \nonumber 
    &= -[\bar\omega]_\times \xi_p + e_v + \xi_v^d + c_3 \tilde \omega + c_2 e_T \tilde T + d_2 \bar R \tilde g 
\end{align}
We design a feedback linearization control:
\begin{align}
    \xi_v^d = -c_3 \tilde \omega - c_2 e_T \tilde T - d_2 \bar R \tilde g - K_p e_p  \label{eq:pos_control}
\end{align}
where $K_p$ represents the position control gain.
Substitute \eqref{eq:pos_control} into \eqref{eq:dot_xi_p_2}:
\begin{align}
    \dot{\xi_p} = -K_p \xi_p + e_v
\end{align}

The entire error dynamics are:
\begin{align}
    \dot{\xi_p} &=  -[\bar\omega]_\times \xi_p -K_p \xi_p + e_v \\
    \dot{e_v} &= -[\bar\omega]_\times e_v - K_v e_v + [-\bar T e_T]_\times e_r \\
    \dot{e_r} &= -[\bar\omega]_\times e_r-K_r e_r.
\end{align}

We could use Linear Matrix Inequalities (LMIs) or $H_{\inf}$ approach to choose gain for the system.

\section{Lyapunov Stability Proof}
\label{sec:stability_inertial}

\begin{theorem}[Exponential stability using inertial derivatives]
Consider the error dynamics:
\begin{align}
    \dot{\xi_p} &=  -[\bar\omega]_\times \xi_p -K_p \xi_p + e_v \\
    \dot{e_v} &= -[\bar\omega]_\times e_v - K_v e_v + [-\bar T e_T]_\times e_r \\
    \dot{e_r} &= -[\bar\omega]_\times e_r-K_r e_r.
\end{align}
where $K_p,K_v,K_r\in\mathbb{R}^{3\times3}$ are symmetric positive-definite gains, 
$\bar\omega(t)\in\mathbb{R}^3$ is the reference angular velocity, and
\begin{align}
B = -\bar T[e_T]_\times,
\end{align}
with $\bar T\ge 0$ and unit vector $e_T$. Assume the state is represented on a single continuous branch of the matrix logarithm. If the gains satisfy
\begin{equation}\label{eq:gain_cond_inertial}
\lambda_{\min}(K_r) \;>\; \frac{\|B\|^2}{2\lambda_{\min}(K_v)},
\end{equation}
then the equilibrium $(\xi_p,e_v,e_r)=(0,0,0)$ is exponentially stable.
\end{theorem}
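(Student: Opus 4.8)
The plan is to certify stability with a single composite quadratic Lyapunov function adapted to the cascade $e_r \rightarrow e_v \rightarrow \xi_p$, applying Young's inequality to the lone cross-coupling $Be_r$ in exactly the way that reproduces the advertised gain bound \eqref{eq:gain_cond_inertial}. The observation that makes the time-varying reference harmless is that $[\bar\omega]_\times$ is skew-symmetric, so for any quadratic form $\tfrac12 z^\top z$ one has $z^\top[\bar\omega]_\times z = 0$; hence every frame-rotation term drops out of every Lyapunov derivative, and $\bar\omega(t)$ never appears in the stability estimates. The single-branch-of-the-logarithm hypothesis is used only to guarantee that $\xi$, and therefore the error dynamics as written, are well defined and continuously differentiable along the closed-loop trajectory.

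First I would handle the $(e_v,e_r)$ block in isolation with $V_1 = \tfrac12\|e_v\|^2 + \tfrac12\|e_r\|^2$. Differentiating along the dynamics and dropping the skew terms gives $\dot V_1 = -e_v^\top K_v e_v + e_v^\top B e_r - e_r^\top K_r e_r$. Using $e_v^\top K_v e_v \ge \lambda_{\min}(K_v)\|e_v\|^2$, $e_r^\top K_r e_r \ge \lambda_{\min}(K_r)\|e_r\|^2$, the Cauchy--Schwarz bound $e_v^\top B e_r \le \|B\|\,\|e_v\|\,\|e_r\|$ (note $\|B\|=\bar T$ since $e_T$ is a unit vector), and the weighted Young split $\|B\|\,\|e_v\|\,\|e_r\| \le \tfrac{\lambda_{\min}(K_v)}{2}\|e_v\|^2 + \tfrac{\|B\|^2}{2\lambda_{\min}(K_v)}\|e_r\|^2$, I get
\[
\dot V_1 \le -\tfrac{\lambda_{\min}(K_v)}{2}\|e_v\|^2 - \Big(\lambda_{\min}(K_r) - \tfrac{\|B\|^2}{2\lambda_{\min}(K_v)}\Big)\|e_r\|^2 .
\]
Condition \eqref{eq:gain_cond_inertial} is precisely what makes the $\|e_r\|^2$ coefficient negative, so $\dot V_1 \le -c_1 V_1$ with $c_1 = \min\{\lambda_{\min}(K_v),\, 2(\lambda_{\min}(K_r) - \|B\|^2/(2\lambda_{\min}(K_v)))\} > 0$, proving that the $(e_v,e_r)$ subsystem is globally exponentially stable.

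Next I would reintroduce $\dot\xi_p = (-[\bar\omega]_\times - K_p)\xi_p + e_v$, which is a uniformly exponentially stable linear time-varying system driven by the exponentially decaying signal $e_v$, and close the argument with the weighted composite $V = \tfrac12\|\xi_p\|^2 + \ell V_1$ for a constant $\ell>0$ to be fixed. Its derivative satisfies $\dot V \le -\lambda_{\min}(K_p)\|\xi_p\|^2 + \|\xi_p\|\,\|e_v\| + \ell\dot V_1$; applying the split $\|\xi_p\|\,\|e_v\| \le \tfrac{\lambda_{\min}(K_p)}{2}\|\xi_p\|^2 + \tfrac{1}{2\lambda_{\min}(K_p)}\|e_v\|^2$ leaves a leftover $\tfrac{1}{2\lambda_{\min}(K_p)}\|e_v\|^2$ that is absorbed by the $-\ell\tfrac{\lambda_{\min}(K_v)}{2}\|e_v\|^2$ term from $\ell\dot V_1$ once $\ell > 1/(\lambda_{\min}(K_p)\lambda_{\min}(K_v))$. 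For such $\ell$ all three quadratic coefficients of $\dot V$ are negative, so $\dot V \le -c_0(\|\xi_p\|^2 + \|e_v\|^2 + \|e_r\|^2)$; since $V$ is a positive-definite quadratic form equivalent to that sum, this yields $\dot V \le -c\,V$ and hence global exponential stability of $(\xi_p,e_v,e_r)=(0,0,0)$.

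I expect the only real subtlety to be choosing the Young weights so the proof delivers \emph{exactly} \eqref{eq:gain_cond_inertial}: an unbalanced split would instead prove only the weaker sufficient condition $\lambda_{\min}(K_r) > \|B\|^2/(4\lambda_{\min}(K_v))$, so the factor of two in the statement is an artifact of the symmetric split and should be presented as such. Everything else is the skew-symmetry cancellation plus a standard cascade/domination estimate; one could equally cite a linear-cascade lemma (GES block $+$ GES block $+$ linear bounded interconnection $\Rightarrow$ GES), but the explicit composite $V$ is preferable because it simultaneously yields an explicit decay-rate bound.
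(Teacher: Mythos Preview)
Your argument is correct and follows essentially the same route as the paper: a quadratic Lyapunov function, the skew--symmetry identity $x^\top[\bar\omega]_\times x=0$ to kill the frame--rotation terms, and the weighted Young split on $e_v^\top B e_r$ that produces exactly the gain condition~\eqref{eq:gain_cond_inertial}. The one substantive difference is that the paper works with the \emph{unweighted} function $V=\tfrac12(\|\xi_p\|^2+\|e_v\|^2+\|e_r\|^2)$ and treats both cross terms in a single pass, whereas you stage the cascade and attach a scaling $\ell$ to $V_1$. Your weighting is not merely cosmetic: the Young split $\xi_p^\top e_v \le \tfrac{\kappa_p}{2}\|\xi_p\|^2 + \tfrac{1}{2\kappa_p}\|e_v\|^2$ leaves a residue $+\tfrac{1}{2\kappa_p}\|e_v\|^2$ that must be absorbed by the $-\tfrac{\kappa_v}{2}\|e_v\|^2$ term; with the unweighted choice $\ell=1$ this tacitly requires the extra hypothesis $\kappa_p\kappa_v>1$ (the paper's displayed computation actually drops this residue when collecting the $\|e_v\|^2$ coefficient). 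Your choice $\ell>1/(\kappa_p\kappa_v)$ removes that hidden constraint, so your version establishes the theorem under~\eqref{eq:gain_cond_inertial} alone, as stated.
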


\begin{proof}
Choose the standard quadratic Lyapunov function
\begin{align}
V(\xi_p,e_v,e_r) = \tfrac12\big(\|\xi_p\|^2 + \|e_v\|^2 + \|e_r\|^2\big).
\end{align}
Compute the time derivative of $V$ along trajectories using the error dynamics:
\begin{align*}
\dot V
&= \xi_p^\top \dot\xi_p + e_v^\top \dot e_v + e_r^\top \dot e_r \\
&= \xi_p^\top\big(-K_p\xi_p + e_v - [\bar\omega]_\times \xi_p\big) \\
&\quad + e_v^\top\big(-K_v e_v + B e_r - [\bar\omega]_\times e_v\big) \\
&\quad + e_r^\top\big(-K_r e_r - [\bar\omega]_\times e_r\big).
\end{align*}

Group terms:
\begin{align}
\dot V &= -\xi_p^\top K_p \xi_p + \xi_p^\top e_v
           - e_v^\top K_v e_v + e_v^\top B e_r - e_r^\top K_r e_r \\
           &\;+\; \underbrace{\big(-\xi_p^\top[\bar\omega]_\times\xi_p - e_v^\top[\bar\omega]_\times e_v - e_r^\top[\bar\omega]_\times e_r\big)}_{=:S}.
\end{align}

Now note the skew-symmetry property: for any vector $x\in\mathbb{R}^3$ and skew matrix $S=[\bar\omega]_\times$,
\begin{align}
x^\top S x = 0.
\end{align}
Therefore each term in $S$ is zero, so $S=0$. Thus
\begin{align}
\dot V = -\xi_p^\top K_p \xi_p + \xi_p^\top e_v
           - e_v^\top K_v e_v + e_v^\top B e_r - e_r^\top K_r e_r.
\end{align}

The minimum eigenvalues $\kappa_p=\lambda_{\min}(K_p)>0$, $\kappa_v=\lambda_{\min}(K_v)>0$, $\kappa_r=\lambda_{\min}(K_r)>0$.
Bound the cross-terms using Young's (Cauchy) inequality:
\begin{align}
\xi_p^\top e_v \le \tfrac{\kappa_p}{2}\|\xi_p\|^2 + \frac{1}{2\kappa_p}\|e_v\|^2,
\end{align}
and
\begin{align}
e_v^\top B e_r \le \tfrac{\kappa_v}{2}\|e_v\|^2 + \frac{\|B\|^2}{2\kappa_v}\|e_r\|^2.
\end{align}
Substituting these bounds yields
\begin{align*}
\dot V &\le -\kappa_p\|\xi_p\|^2 + \left(\tfrac{\kappa_p}{2}\|\xi_p\|^2 + \tfrac{1}{2\kappa_p}\|e_v\|^2\right)\\
            & \quad -\kappa_v\|e_v\|^2 + \left(\tfrac{\kappa_v}{2}\|e_v\|^2 + \tfrac{\|B\|^2}{2\kappa_v}\|e_r\|^2\right) -\kappa_r\|e_r\|^2\\[4pt]
&= -\tfrac{\kappa_p}{2}\|\xi_p\|^2 -\tfrac{\kappa_v}{2}\|e_v\|^2
   -\Big(\kappa_r - \tfrac{\|B\|^2}{2\kappa_v}\Big)\|e_r\|^2.
\end{align*}

By assumption $\kappa_r > \frac{\|B\|^2}{2\kappa_v}$, so define
\begin{align}
\underline\alpha := \min\Big\{\tfrac{\kappa_p}{2},\; \tfrac{\kappa_v}{2},\; \kappa_r - \tfrac{\|B\|^2}{2\kappa_v}\Big\} > 0.
\end{align}
Then $\dot V \le -\underline\alpha(\|\xi_p\|^2 + \|e_v\|^2 + \|e_r\|^2) = -2\underline\alpha V$. Standard comparison yields
\begin{align}
V(t) \le V(0) e^{-2\underline\alpha t},
\end{align}
and therefore there exist constants $c>0,\alpha>0$ such that
\begin{align}
\|[\xi_p(t);\; e_v(t);\; e_r(t)]\| \le ce^{-\alpha t}\|[\xi_p(0);\; e_v(0);\; e_r(0)]\|.
\end{align}
Hence the equilibrium is exponentially stable.
\end{proof}

% \bibliographystyle{IEEEtran}
% \bibliography{references}

\end{document}